\newcommand*\widefbox[1]{\fbox{\hspace{2em}#1\hspace{2em}}}
\newtheorem{example}{Example}
\newtheorem{theorem}{Theorem}
\newtheorem{lemma}[theorem]{Lemma}
\theoremstyle{definition}
\newtheorem{definition}{Definition}
\newcommand{\ourname}{$\alpha$-\textsf{fairness }}
\newcommand{\cof}{\textsc{CoF}}
\newcommand{\OurFPP}{\textsc{LinP-FFP}}
\title{ Individual Fairness in Feature-Based Pricing for Monopoly Markets }
\author{ Shantanu Das \\
	\href{https://mll.iiit.ac.in/}{Machine Learning Lab}\\
	IIIT Hyderabad, India\\
	\texttt{shantanu.das@research.iiit.ac.in} \\
	\And
	Swapnil Dhamal \\
	T\'el\'ecom SudParis \\
	\'Evry, France \\
	\texttt{swapnil.dhamal@gmail.com} \\
	\AND
	Ganesh Ghalme \\
	Technion \\
	Israel Institute of Technology, Israel\\
	\texttt{ganeshghalme@gmail.com} \\
	\And
	Shweta Jain \\
	Indian Institute of Technology \\
	Ropar, India \\
	\texttt{shwetajain@iitrpr.ac.in} \\
	\And
	Sujit Gujar \\
	\href{https://mll.iiit.ac.in/}{Machine Learning Lab}\\
	IIIT Hyderabad, India\\
	\texttt{sujit.gujar@iiit.ac.in} \\
}
\date{}
\begin{document}
\maketitle

\begin{abstract}
We study fairness in the context of feature-based price discrimination in monopoly markets. We propose a new notion of individual fairness, namely, \ourname, which guarantees that individuals with similar features face similar prices. First, we study discrete valuation space and give an analytical solution for optimal fair feature-based pricing.  We show that the cost of fair pricing is defined as the ratio of expected revenue in an optimal feature-based pricing to the expected revenue in an optimal fair feature-based pricing  (\cof) can be arbitrarily large in general.
When the revenue function is continuous and concave with respect to the prices, we show that one can achieve \cof\  strictly less than $2$, irrespective of the model parameters. Finally, we provide an algorithm to compute fair feature-based pricing strategy that achieves this \cof. 
\end{abstract}

\section{Introduction}
\label{sec:intro2}
The Internet has transformed the way markets function. 
Today's Internet-based ecosystems such as entertainment and e-commerce marketplaces are more consumer-centric and information-driven than ever before. Data and  AI  systems are primarily used to power advertising, consumer retention, and personalized experience. These AI systems are deployed to aggregate individual choices and preferences to make personalized experiences possible. It is a common practice to use aggregated information about consumers to offer different prices to different consumers or segments of the market; this practice is commonly termed \emph{price discrimination}~\cite{varian92}. 

Price discrimination has come under ethical scrutiny on multiple instances in the recent past. For example, it was found that  Orbitz, an online travel agency, charges Mac users more than Windows users~\cite{Orbitz}. Uber's strategy to charge personalized prices came under heavy consumer backlash~\cite{uberOne,uberTwo}, and thanks to the fine-grained data analysis of consumer behavior, several such instances were reported in e-commerce and retail industry~\cite{Hinz11}. More recently, \cite{akshat2021} showed that neighborhoods
with high non-white populations, higher poverty, younger
residents, and high education levels faced higher cab trip
fares in Chicago.
Not surprisingly, the regulatory bodies and research community has taken notice. Economists have raised concerns on fairness issues of personalized pricing~\cite{stefan}.  
Price discrimination based on nationality or residence is made illegal in the ~\cite{EU-PD}. In the USA, a white house report provides guidelines for enforcing existing anti-discrimination, privacy, and consumer protection laws while practicing discriminatory pricing~\cite{usa}. Given the overwhelming evidence and rising concerns, there is an urgent need to study price discrimination and fairness formally.

Sellers or firms use price discrimination for multiple reasons, including increasing revenue, covering transportation and storage costs, increasing market reach, rewarding loyal consumers, promoting a social cause, and so on \cite{cassady46}. In general,  price discrimination does not always raise ethical and fairness issues and hence requires a careful inspection to categorize situations where this practice may lead to treatment disparity and invite regulatory intervention \cite{kaplan}.
In this work, we focus on designing the pricing strategies for a seller (monopolist) who wants to maximize the revenue via price discrimination while ensuring fairness amongst the consumers.
 
A revenue-maximizing seller with complete knowledge of consumer valuations without fairness consideration would charge each consumer her valuation for the product. This pricing strategy, otherwise called \emph{first-degree} price discrimination, may result in wild fluctuations in prices and is considered unfair in general \cite{Moriarty21}. Also, in practice,  sellers do not have full access to individual consumer valuations but may have a  distribution over valuations through \emph{features}. In such 
\emph{feature-based pricing} (FP), 
the seller segregates the market into segments through the consumer features. 
The seller's problem then reduces to finding optimal pricing for each segment~\cite{bergemann2015limits,algorithmic-pricedisc}. Such FP is referred to as \emph{third-degree price discrimination} in the literature. 
In this paper, our goal is to ensure fairness issues in feature-based personalized pricing.

\paragraph{Our Contributions} 
We introduce the notion of \emph{\ourname} in price discrimination which ensures that similar individuals face similar prices. We emphasize that if individuals with similar features are charged differently by segregating them into different segments, the interpersonal price comparison based on their features renders fairness issues. With this, we introduce a model for optimal \emph{fair feature-based pricing} (FFP) as the problem of maximizing revenue while ensuring \ourname$\!\!$. We begin with two segments in the market and discrete valuations and propose an optimal FFP scheme (Section \ref{ssec:optdiscreteffp}). To quantify the loss in the revenue due to fairness, we then introduce \emph{cost of fairness} (\cof) -- the ratio of expected revenue in an optimal  FFP to the expected revenue in an optimal FP. We prove that a constant lower bound on \cof\ is impossible to achieve in general. 

Next, in Section \ref{sec:ffpcontinuous}, under the assumption that  the  revenue function is concave in offered prices
\cite{bergemann2021thirddegree}\footnote{this assumption is standard in economics as a large number of probability distributions follow this},
we show that one can achieve a constant upper bound on \cof. Here, first, we show that the seller can compute optimal  FFP using a convex program if it has access to distributional information (knows all consumers' valuation distribution functions). We then identify a class of FFP strategies, namely \OurFPP\ that satisfy \ourname$\!\!$. With the help of these pricing strategies, we then show that the  \cof\ is strictly less than $2$ irrespective of model parameters. 
Finally, we propose OPT-\OurFPP, an  $O(K\log(K))$ time algorithm where $K$ is the number of segments that does not need access to complete distributional information and computes $\alpha$-fair pricing that achieves the aforementioned \cof\  (Algorithm~\ref{alg:ouralgo} and Theorem~\ref{thm:opt_cofbound}).

\if 0

In summary, our contributions are as follows.

We introduce the notion \ourname\ in discriminatory pricing. With this, 
\begin{itemize}
    \item We propose a model for optimal fair feature-based pricing.
    \item For markets with two segments and discrete valuations, we propose an analytical solution to optimal fair feature-based pricing. 
    \item We prove that, in general, \cof\  could be arbitrarily bad (Theorem~\ref{}). 
    \item For a well-studied economic model of concave revenue, we show that one can achieve \cof\  strictly less than 2 (Theorem~\ref{}).
    \item Finally, 
\end{itemize}
\fi

\section{Related Work}

The impact of discriminatory pricing on consumer and seller surplus was first considered by \cite{bergemann2015limits} when the consumer characteristics are known to the seller. The authors proposed a method to provide the optimal market segmentation. The generalized problem was then considered by \cite{algorithmic-pricedisc} which extended the work of \cite{bergemann2015limits} to the case where only partial information about the consumer's valuation was known to the seller.

When the valuations of the consumers are not known, \cite{value-personalized,value-personalpricing} propose feature-based pricing and provides bounds on the value generated using idealized personalized pricing and Feature-based pricing over Uniform pricing. The value of feature-based pricing depends on the correlation of valuations and consumer features. \cite{random-network} consider the first-degree price discrimination over the social network where the centrality measures in social networks determine the features of the consumers. They provide bounds on the value of network-based personalized pricing in large random social networks with varying edge densities.
Our work follows a similar approach because we derive personalized pricing from the features. However, naive feature-based pricing can be very unfair to the consumers, as we show in Proposition \ref{prop:discretecof}. Our focus is to design feature-based pricing that is fair at the same time. 

Recently, many questions have been raised on the ethical side of price discrimination methods. \cite{Moriarty21} strongly criticizes online personalized pricing and suggests that personalized prices compete unfairly for social surplus created by transactions. \cite{ethical-legal} points out the need to design personalized pricing with ethical considerations, which can provide win-win outcomes for both organizations and consumers. \cite{fairness-tackled} discusses that discriminatory pricing leads to the perception of unfairness amongst the consumers, which undermines the stability of retail platforms. They discuss that when consumers are involved in forming the prices, this leads to improved fairness perception, thus leading to better retentivity. \cite{design-against-discrimination} discusses that web-based platforms typically use many private features of user profiles to connect buyers and sellers. When users interact on such platforms, it leads to discrimination regarding race, gender, and possibly other protected characteristics. All these studies lead to understanding the optimal price discriminatory strategies under the fairness constraint, which is the focus of our work. 

Finally, \cite{personalfairness} presents a list of metrics like \emph{price disparity}, \emph{equal access}, \emph{allocative efficiency fairness} to measure and analyze fairness in feature-based pricing and study its interplay with welfare. The metrics discussed are mainly the group fairness notions which are entirely different from \ourname discussed in this paper. We emphasize that though the above papers discuss the ethical issues in price discrimination, none of them provides a systematic approach to design the pricing strategy that maximizes the revenue and ensures the fairness guarantee.

\section{Preliminaries}
\label{sec:prelim}
We consider a  market with a monopolist seller  seeking to price a single product available in infinite supply.  The market is divided into finite number of segments $\mathcal{X} = \{x_1, x_2, \dots , x_K\}$, where $x_i$ represents the $i^{\text{th}}$ segment.
The seller, given access to $\mathcal{X}$, can choose to price  discriminate across segments to extract maximum revenue.

\begin{table*}[ht!]
    \centering
    \renewcommand{\arraystretch}{1.2}
    \begin{tabular}{|c|c|}
        \hline \textbf{Notation} & \textbf{Description} \\
        \hline 
        FP & Feature-based Pricing \\ \hline
        FFP & Fair Feature-based Pricing \\
        \hline
        $\mathcal{F}_k$, $f_k()$ &  Valuations CDF, PDF for $k^{\text{th}}$ consumer segment respectively\\ \hline
        $\mathcal{X}$ & Set of all consumer features/types \\ \hline
        $\mathcal{V}$ & Support set of consumers' valuations \\ \hline
        $x_k$ & Consumer feature of the $k^{\text{th}}$ segment \\ \hline
        $\beta_k$ & The fraction of consumers in the $k^{\text{th}}$ segment \\ \hline
        $\mathbf{p} = (p_1, p_2, \ldots p_K)$ & Feature-based price vector \\ \hline
        $\pi_k(p_k)$ & Revenue generated per consumer in the $k^{\text{th}}$ segment \\ \hline 
        $\Pi(p)$ & Revenue generated by $p$ across all consumer segments \\ \hline
        $\mathbf{\widehat{p}} = (\widehat{p}_1, \widehat{p}_2, \ldots \widehat{p}_K)$ & Price function in optimal price discrimination \\ \hline
        $d_{ij}:=d(x_i, x_j)$ & A real-valued metric on the consumer feature space $\mathcal{X}$ \\ \hline
        $\alpha$ & Fairness parameter \\ \hline
        $\mathbf{p^{\star}} = (p^{\star}_1, p^{\star}_2, \ldots p^{\star}_K)$ & Optimal fair feature-based price function \\ \hline
        $\widetilde{\mathbf{p}} = (\widetilde{p}_1,\widetilde{p}_2,\ldots,\widetilde{p}_K)$ & Price vector for OPT-\OurFPP \\ \hline
        \cof & Cost of Fairness \\ \hline
        $L_m$ & Linear approximation of concave revenue curve with $m$ as parameter \\ \hline
    \end{tabular}
    \renewcommand{\arraystretch}{1}
    \caption{Notation Table}
    \label{tab:notations}
\end{table*}

Consumers' valuations for the single product are non-negative random variables drawn from the set $\mathcal{V}$ (same across all segments). Let  $\mathcal{F}_i(\cdot)$ be the cumulative distribution function for the valuation of the consumers in $i^{\text{th}}$ segment, and  $f_i(\cdot)$ be corresponding probability density function (probability mass function when $\mathcal{V}$ is discrete).
In this paper, we consider the following two cases separately,  (a) $\mathcal{V}$ is discrete and finite, and (b) $\mathcal{V}$ is continuous.
Next, we present feature-based pricing model.

\subsection{Feature-based Pricing Model}
\label{ssec:featureprice} In feature-based pricing (FP), one can consider, without loss of generality, that the 
consumer feature is a representative of the market segment to which she belongs. Note that multiple consumers may have the same feature vector, and all the consumers having identical features belong to the same market segment. 
For simplicity, we will write $p_i:=$ price offered to the consumer in the $i^{\text{th}}$ segment. 
A consumer makes the purchase only if her valuation is equal to or more than the offered price. 
The expected revenue  per consumer generated from the $i^{\text{th}}$ segment with a price $p_i\in \mathbb{R}_{+}$  is given by 
\begin{equation}
    \pi_i(p_i) = p_i \cdot (1 - \mathcal{F}_i(p_i)) \\
\end{equation}
Whenever it is clear from the context we refer to expected revenue  per consumer from a segment to be expected revenue from that segment. Let $\beta_i$ be the fraction of consumers in the $i^{\text{th}}$  segment, then the expected revenue per consumer generated across all  segments is given as
$\Pi(\mathbf{p}) = \sum_{x_i \in \mathcal{X}} \beta_i\pi_i(p_i)
$.
We assume that $\beta_i$s are known to the seller. 
We call the sellers problem of revenue maximization as OPT$_{FP}(\mathcal{V,\mathcal{X},F,\beta})$ where $\mathcal{F}=(\mathcal{F}_1,\ldots,\mathcal{F}_K)$ and $\beta=(\beta_1,\ldots,\beta_K)$.

In the absence of fairness constraints, OPT$_{FP}(\cdot)$  reduces to charging each segment separately and optimal FP strategy $\widehat{\mathbf{p}}$ consisting $\widehat{p_i}$ for segment $i$ is  given by
 $
    \widehat{p}_i \in   \underset{p_i \in \mathbb{R}_{+} }{\operatorname{argmax}} \  \pi_i(p_i).$
 
\paragraph{Fairness in Feature-based Pricing}
\label{ssec:ind-fair}

Let $d:\mathcal{X} \times \mathcal{X} \rightarrow \mathbb{R}_{+}$ be  a distance function over $\mathcal{X}$. We assume that such a function exists and is well defined in $\mathcal{X}$, i.e., $(\mathcal{X},d)$ is a metric space. The distance function   quantifies the   dissimilarity between feature vectors of individuals belonging to  market segments.
For simplicity we write $d(x_i, x_j) := d_{ij}$. Individual fairness in FP strategy is defined as:
\begin{definition}[\ourname$\!\!$]
\label{def:fairness} 
A price function $\mathbf{p}:\mathcal{X} \rightarrow \mathbb{R}_+^{K}$ is $\alpha$-fair with respect to  $d$ iff for all $x_i,x_j \in \mathcal{X}$, we have
\begin{equation}\label{eq:alphafair}
 | p_i - p_j| \leq \alpha \cdot  d_{ij}.
\end{equation}
\end{definition}
We call a  pricing  strategy Fair Feature-based Pricing ($\alpha$-FFP) that  satisfies \cref{eq:alphafair} with a given value of $\alpha$. It is easy to see from the definition that any $\alpha$-FFP is also $\alpha{'}$-FFP for any $\alpha{'} \geq  \alpha$. We will drop the quantifier $\alpha$ and call it FFP when it is clear from the context. 

\paragraph{Cost of Fairness (\cof) } Next, we define \cof\  as the deviation from optimality due to fairness constraints given in \cref{eq:alphafair}. It is defined as the ratio of expected revenue generated by optimal feature-based pricing and fair feature-based pricing. 

\begin{definition}[\textsc{Cost of Fairness }(\cof)]
Cost of fairness for an FFP strategy $\mathbf{p}$ is defined as
\begin{equation}
    \cof = \frac{\Pi(\mathbf{\widehat{p}})}{\Pi(\mathbf{\mathbf{p}})}. 
\end{equation}
\end{definition}

In the following sections, we analyze FP and FFP strategies and their \cof\ when $\mathcal{V}$ is discrete (\cref{sec:discrete})
and continuous (\cref{sec:cont}).

\section{FFP for Discrete Valuations}\label{sec:discrete}
We want to ensure \ourname\ in the pricing strategy given the optimal FP. \ourname\ is achieved by maximizing revenue while satisfying the fairness constraints. In this section, we derive optimal FP (\cref{ssec:optdiscretefp}), propose how to achieve \ourname\ (\cref{ssec:optdiscreteffp}), and provide an upper bound on \cof\ (\cref{ssec:cofdiscrete}) for discrete valuation setting.

We consider the simplest setting described as follows:
Let the consumer segments be given by $\mathcal{X} = \{ x_1, x_2\}$ and their valuations are drawn from a discrete set $\mathcal{V} = \{ v_1, v_2\}$, we assume $v_1 < v_2$ without loss of generality. 
Let $\beta_1 = \beta$ and $\beta_2 = 1-\beta$. Further, let $f_1(v_1) = q_1$ ($f_2(v_1) = q_2$) denote the probability that a consumer has valuation $v_1$ in segment 1 (segment 2).
The expected revenue generated by $\mathbf{p}$ is given by:
\begin{align}
\label{eq:opt-price}
\Pi(\mathbf{p})=&\beta p_1 [ q_1 \mathbbm{1}(v_1 \geq p_1) + (1-q_1) \mathbbm{1}(v_2 \geq p_1) ]
\nonumber \\ 
& + (1-\beta) p_2 [ q_2 \mathbbm{1}(v_1 \geq p_2) + (1-q_2) \mathbbm{1}(v_2 \geq p_2) ]
\end{align}

\subsection{Optimal Feature-based Pricing} \label{ssec:optdiscretefp}
As discussed earlier, $\Pi(\mathbf{p})$ can be maximized by maximizing $\pi_i(p_i)$ for each market segment independently if there are no fairness constraints.
This problem is an integer program with price for each consumer type being a discrete variable. 
The revenue generated depends on $\beta_i$ and $f_i(\cdot)$ ($\beta$, $q_1$, $q_2$ in the current simplest case). The optimal FP is then given as 
\begin{equation} \label{eq:idealprice1}
    \text{For }\, i \in \{1,2\}:\; \widehat{p}_i = \begin{cases} 
      v_1 & \text{if } q_i\geq 1-\frac{v_1}{v_2} \\
      v_2 & \text{otherwise} 
   \end{cases}
\end{equation}
\begin{proof}
For a market segment $i$, $\pi_i(v_1) = v_1$ and $\pi_i(v_2) = v_2(1-q_i)$. So, $\widehat{p}_i = v_1$ if 
\begin{equation*}
\pi_i(v_1) \geq \pi_i(v_2) \implies v_1 \geq v_2(1-q_i) \implies q_i \geq 1 - \frac{v_1}{v_2}
\end{equation*}
otherwise, $\widehat{p}_i = v_2$.

\end{proof}
Next, we analyze the fairness aspects of the above pricing strategy.

\subsection{Optimal Fair Feature-based Pricing} \label{ssec:optdiscreteffp}
Let $(\mathcal{X},d)$ be a metric space. We model the Optimal fair feature-based pricing (FFP) problem as integer program which maximizes $\Pi(\mathbf{p})$ with \ourname constraints described in Eq.\eqref{eq:alphafair}. We denote this problem as OPT$_{FFP}(\mathcal{V}, \mathcal{X}, d, \mathcal{F}, \mathbf{\beta}, \alpha)$ and the corresponding optimal FFP strategy is denoted as $\mathbf{p}^{\star}$. First we make an interesting and very useful claim for binary valuations. 

\begin{lemma}
When $\mathcal{V}=\{v_1,v_2\}$, and if  $\widehat{\mathbf{p}}$ is not $\alpha$-fair, 
OPT $_{\mbox{FFP}} (\mathcal{V},\mathcal{X},d,\mathcal{F}, \mathbf{\beta},\alpha)$ reduces to OPT$_{\mbox{FP}} (\widetilde{\mathcal{V}},\mathcal{X},\mathcal{F}, \mathbf{\beta}) $ where $\widetilde{\mathcal{V}} \text{ is either } \{v_1\}, \text{ or } \{v_2\}, \text{ or } \{v_1, v_1+\alpha d_{12}\} $.

\label{lem:reduction}
\end{lemma}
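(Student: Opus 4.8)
The plan is to exploit the very simple shape of the per-segment revenue curve when $\mathcal V=\{v_1,v_2\}$: on $(0,v_1]$ we have $\pi_i(p)=p$ (strictly increasing), on $(v_1,v_2]$ we have $\pi_i(p)=p(1-q_i)$ (increasing, preceded by a downward jump at $v_1$), and $\pi_i(p)=0$ for $p>v_2$. Since $\widehat{\mathbf p}$ is not $\alpha$-fair while each $\widehat p_i\in\{v_1,v_2\}$ by \eqref{eq:idealprice1}, we must have --- relabelling the two segments if necessary, which leaves the three candidate sets $\{v_1\},\{v_2\},\{v_1,v_1+\alpha d_{12}\}$ unchanged since $d_{12}=d_{21}$ --- that $\widehat p_1=v_1$, $\widehat p_2=v_2$, and, crucially, $\delta:=\alpha d_{12}<v_2-v_1$, so that $v_1<v_1+\delta<v_2$. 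I would first record these facts together with the observations that the feasible region for FFP is the band $\{(p_1,p_2):|p_1-p_2|\le\delta\}$, that the optimal FFP revenue is positive (the point $(v_1,v_1)$ is feasible and earns $v_1$), and that no optimal price need exceed $v_2$ (replacing a price above $v_2$ by $v_2$ cannot decrease revenue and cannot enlarge the gap).

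Next I would pin down the location of an optimal FFP price vector $\mathbf p^{\star}=(p^{\star}_1,p^{\star}_2)$. Write $M=\max\{p^{\star}_1,p^{\star}_2\}$ and $m=\min\{p^{\star}_1,p^{\star}_2\}$, so $M-m\le\delta$ and $m,M\in(0,v_2]$. The key claim is: if $M\notin\{v_1,v_2\}$, then $M-m=\delta$ and $m=v_1$, hence $M=v_1+\delta$. Indeed, if $M-m<\delta$ we could raise $M$ alone (it sits in the interior of an increasing piece of its segment's $\pi$), strictly improving; and if $M-m=\delta$ but $m\neq v_1$, then $m$ also lies in the interior of an increasing piece (the only downward jump in $(0,v_2]$ is at $v_1$), so raising $m$ and $M$ together by the same small amount stays feasible and strictly improves --- contradicting optimality. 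Consequently, at an optimum exactly one of three situations occurs: (i) $M=v_1$, whence both prices are in $(0,v_1]$ and pushing both up to $v_1$ only helps, giving $(v_1,v_1)$; (ii) $M=v_2$, whence the smaller price lies in $[v_2-\delta,v_2]\subseteq(v_1,v_2]$ where $\pi$ is increasing, so pushing it up to $v_2$ only helps, giving $(v_2,v_2)$; (iii) $\{p^{\star}_1,p^{\star}_2\}=\{v_1,v_1+\delta\}$.

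Finally I would match these to the three reduced problems and close the loop. The point $(v_1,v_1)$ attains $\text{OPT}_{FP}(\{v_1\})$; $(v_2,v_2)$ attains $\text{OPT}_{FP}(\{v_2\})$; and in situation (iii), whichever ordering of $\{v_1,v_1+\delta\}$ is better is dominated by letting each segment independently choose the better of $v_1$ and $v_1+\delta$, i.e. by $\text{OPT}_{FP}(\{v_1,v_1+\delta\})$ --- all four such combinations have price gap $0$ or $\delta$ and are therefore $\alpha$-fair, and here I use $v_1<v_1+\delta<v_2$ so that $\pi_i(v_1+\delta)=(v_1+\delta)(1-q_i)$. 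Hence the optimal FFP value is at most the maximum of the three $\text{OPT}_{FP}$ values. Conversely, each of those three reduced problems has an optimal solution whose price vector is $\alpha$-fair (again all relevant gaps are $0$ or $\delta$), so each of the three values is at most the optimal FFP value. Therefore $\text{OPT}_{FFP}$ equals the largest of the three, i.e. equals $\text{OPT}_{FP}(\widetilde{\mathcal V})$ for one of $\widetilde{\mathcal V}\in\{\{v_1\},\{v_2\},\{v_1,v_1+\alpha d_{12}\}\}$, with a corresponding optimal solution transferring back; and since $\text{OPT}_{FP}(\{v_1\})\le\text{OPT}_{FP}(\{v_1,v_1+\alpha d_{12}\})$ always, the first case is just the subcase of the third in which neither segment picks the higher price.

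The main obstacle --- and the only genuinely informative step --- is the claim in the second paragraph: recognising that an optimal fair price can fail to sit at a ``natural'' valuation breakpoint only when the fairness constraint is active and the lower-priced segment is pinned exactly at $v_1$, just below the downward jump of $\pi$; that is precisely what manufactures the new price level $v_1+\alpha d_{12}$. A few degenerate cases ($q_i=1$, so $\pi_i$ is flat on $(v_1,v_2]$; $\alpha d_{12}=0$, so fairness forces $p_1=p_2$ and $v_1+\alpha d_{12}=v_1$; or $q_i=1-v_1/v_2$, so the FP optimum is not unique) need only a one-line check each and do not affect the statement.
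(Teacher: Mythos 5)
Your proof is correct, and it reaches the same destination as the paper's — a case analysis showing that an optimal fair price vector can be taken with support $\{v_1\}$, $\{v_2\}$, or $\{v_1,v_1+\alpha d_{12}\}$ — but it supplies a step the paper only asserts. The paper's proof simply declares that ``the only candidate support sets'' are $\{v_1\}$, $\{v_2\}$, $\{v_1,v_1+\alpha d_{12}\}$, $\{v_2-\alpha d_{12},v_2\}$ and then eliminates the last by observing that no $v_1$-valuation consumer buys at $v_2-\alpha d_{12}>v_1$; it never explains why no other support can be optimal. Your perturbation argument (raise $M$ alone if the fairness constraint is slack; raise both prices together if it is tight but $m\neq v_1$, using that the only downward jump of $\pi_i$ on $(0,v_2]$ is at $v_1$) is exactly the missing justification, and your case (ii) — pushing the lower price up to $v_2$ when $M=v_2$ — subsumes the paper's elimination of $\{v_2-\alpha d_{12},v_2\}$. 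The one place you are slightly informal is the same place the perturbation argument is fragile: when the segment holding the larger price has $q_i=1$, its revenue is flat on $(v_1,v_2]$, so ``strictly improving'' fails and optima off the three supports can exist; you correctly flag that these degenerate optima have the same value as one of the three canonical supports, so the reduction (which is a statement about optimal values) is unaffected, but a careful write-up should phrase the key claim as ``there exists an optimal solution of the stated form'' rather than ``every optimal solution has the stated form.'' You might also note, as the paper's surrounding text does, that the reduction is only claimed under $\widehat{p}_1\neq\widehat{p}_2$ and $v_2-v_1>\alpha d_{12}$, which you correctly derive from the hypothesis that $\widehat{\mathbf{p}}$ is not $\alpha$-fair.
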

\begin{proof}
Let $(p_1,p_2)$ be the tuple of offered prices. Note that if $v_2 - v_1 \leq  \alpha d_{12}$ or $\widehat{p}_1 = \widehat{p}_2$,  then the optimal $\mathbf{p}^{\star} = \mathbf{\widehat{p}}$ with support $\{v_1, v_2\}$ and $\mathbf{\widehat{p}}$ will be trivially fair. We consider a more interesting case when $v_2 - v_1 >  \alpha d_{12}$ and $\widehat{p}_1 \ne \widehat{p}_2$. In this case, the only candidate support sets for optimal fair pricing strategy are: $\{ v_1\}$, $\{ v_2\}$, $\{ v_1, v_1 + \alpha d_{12}\}$, $\{v_2-\alpha d_{12}, v_2\}$. The optimal FFP does not take values from the set $\{v_2-\alpha d_{12}, v_2\}$ as the consumers with valuation $v_1$ would not make any purchase. Hence, the expected revenue with support $\{v_2-\alpha d_{12}, v_2\}$  will be less than or equal to the expected revenue with support $\{v_2\}$.
\end{proof}

We now relax the constraint of binary valuation and analyze the optimal fair pricing scheme for $n$ valuations. The consumer segments are $\mathcal{X} = \{x_1,x_2\}$ with $\beta_1 = \beta$ and $\beta_2 = 1-\beta$, the valuations are drawn from the set $\mathcal{V} = \{v_1, v_2, \dots , v_n\}$, and $f_1(v_i) = q_{i,1}$ and $f_2(v_i) = q_{i,2}$. 
This is a simple extension of the pricing problem, OPT$_{FP} (\mathcal{V},\mathcal{X},\mathcal{F}, \mathbf{\beta})$ modelled as an integer program where the prices are drawn from the set $\mathcal{V}$. If $\mathbf{\widehat{p}}$ is not $\alpha$-fair then, the corresponding OPT$_{FFP} (\mathcal{V},\mathcal{X},d,\mathcal{F}, \mathbf{\beta}, \alpha)$ can be solved by reducing it to OPT$_{FP} (\mathcal{\widetilde{V}},\mathcal{X},\mathcal{F}, \mathbf{\beta})$ with $\widetilde{\mathcal{V}}$ given by:

\begin{equation*}
    \widetilde{\mathcal{V}} = \begin{cases} 
      \{v_i\}, v_i \in \mathcal{V}  & \text{if } p_1^{\star} =  p_2^{\star} \\
      \{v_j, v_j+\alpha d_{12}, v_j-\alpha d_{12}\}, v_j \in \mathcal{V} & \text{if } p_1^{\star} \neq p_2^{\star} 
   \end{cases}
\end{equation*} 
Given the set $\mathcal{\widehat{V}}$, the pricing problem OPT$_{FP} (\mathcal{\widetilde{V}},\mathcal{X},\mathcal{F}, \mathbf{\beta})$ can be solved in constant time.  It is easy to see that computing $\mathcal{\widehat{V}}$ takes $\mathcal{O}(n^{2})$ time for $n$ valuations and $2$ consumer types. Therefore, the fair pricing problem OPT$_{FFP} (\mathcal{V},\mathcal{X},d,\mathcal{F}, \mathbf{\beta}, \alpha)$ can be solved in $\mathcal{O}(n^{2})$ time.

\subsection{\cof\  Analysis} \label{ssec:cofdiscrete}
For $n=2$, based on the values of $q_1, q_2$ we have the following cases:
\begin{multicols}{2}
\begin{enumerate}[leftmargin=*]
    \item $p_1^{\star} = p_2^{\star} = v_1$
    \item $p_1^{\star} = p_2^{\star} = v_2$
    \item $p_1^{\star} = v_1 + \alpha d_{12}$, $p_2^{\star} = v_1$
    \item $p_1^{\star} = v_1$, $p_2^{\star} = v_1 + \alpha d_{12}$ 
\end{enumerate}
\end{multicols}
\noindent
In cases 1 and 2, optimal fair pricing is equivalent to uniform pricing and therefore are `trivially' fair with \cof\  = 1, i.e., $\Pi(\widehat{\mathbf{p}}) = \Pi(\mathbf{p}^{\star})$.
For case 3, $\Pi(\widehat{\mathbf{p}})$ and $\Pi(\mathbf{p}^{\star})$ are given as:  
\begin{align*}
    & \Pi(\widehat{\mathbf{p}}) = \beta(v_2)(1-q_1) + (1-\beta)v_1 \\
    & \Pi(\mathbf{p}^{\star})  = \beta(v_1 + \alpha d_{12})(1-q_1) + (1-\beta)v_1
\end{align*}
Then the cost of fairness for case 3 is given as:

\begin{align}
    \cof\  & = \frac{\Pi(\widehat{\mathbf{p}})}{\Pi(\mathbf{p}^{\star})} = \frac{\beta(v_2)(1-q_1) + (1-\beta)v_1}{\beta(v_1 + \alpha d_{12})(1-q_1) + (1-\beta)v_1} \nonumber \\
    & = \frac{\beta(v_2-v_1) + v_1 - \beta v_2 q_1}{\beta \alpha d_{12}(1-q_1) - \beta v_1 q_1 + v_1 } \nonumber \\
    & = \frac{\beta \left( 1 - \frac{v_1}{v_2}\right) + \frac{v_1}{v_2} - \beta q_1}{\beta \left (\frac{\alpha d_{12}}{v_2}\right)(1-q_1) - \beta \left( \frac{v_1}{v_2}\right)q_1 + \frac{v_1}{v_2}}
    \label{eq:cof}
\end{align}

Replacing $\beta$ with $(1-\beta)$ and $q_1$ with $q_2$ in the above expression, we get a similar approximation of \cof\  for case 4.

\begin{restatable}{proposition}{} Cost of fairness with discrete valuations can go arbitrarily bad. 
\label{prop:discretecof}
\end{restatable}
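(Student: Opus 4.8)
The plan is to show that no constant can bound \cof\ by building, for every $K$, a $K$-segment discrete instance whose \cof\ is $\Theta(K)$. (The two-segment analysis above already hints that more segments are needed; see the last remark below.) I would take the uniform metric $d_{ij}=1$ for $i\ne j$ with fairness parameter $\alpha=1$, equal masses $\beta_i=1/K$, valuation support $\mathcal V=\{0\}\cup\{2^i : 1\le i\le K\}$, and let the consumers in segment $i$ have valuation $2^i$ with probability $2^{-i}$ and valuation $0$ otherwise. The easy step is the numerator: since $\pi_i(p)=p\,2^{-i}$ for $p\in(0,2^i]$ and $\pi_i(p)=0$ for $p>2^i$, we get $\widehat p_i=2^i$ with $\pi_i(\widehat p_i)=1$, so $\Pi(\widehat{\mathbf p})=\sum_i\beta_i=1$; moreover $\widehat{\mathbf p}$ is wildly unfair, as $|\widehat p_i-\widehat p_j|\ge 2>1=\alpha d_{ij}$ for all $i\ne j$.

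The substantive step — which I expect to be the real obstacle — is to prove $\Pi(\mathbf p^{\star})=O(1/K)$, i.e.\ that \emph{every} $\alpha$-fair profile (not just uniform pricing) collects almost no revenue. I would argue as follows. An $\alpha$-fair profile has all coordinates within $\alpha=1$, hence lies in a window $[c,c+1]$ with $c\ge 0$; fix such a profile and bound $\sum_i\pi_i(p_i)$. A segment with $2^i<c$ is priced strictly above its only positive valuation and contributes $0$. As consecutive high valuations differ by at least $2>1$, at most one segment has $2^i\in[c,c+1]$, and it contributes at most $\max_p\pi_i(p)=1$. Every remaining segment has $2^i>c+1\ge p_i$ and contributes $p_i 2^{-i}\le(c+1)2^{-i}$; summing and using the geometric bound $\sum_{i:\,2^i>c+1}2^{-i}\le 2/(c+1)$ (its first term being at most $1/(c+1)$), these contribute at most $2$ in total. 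Hence $\sum_i\pi_i(p_i)\le 3$ and $\Pi(\mathbf p^{\star})\le 3/K$.

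Putting the bounds together yields $\cof=\Pi(\widehat{\mathbf p})/\Pi(\mathbf p^{\star})\ge K/3$, which diverges as $K\to\infty$, proving the proposition. The delicate point is entirely in the second step: one must control the ``far-above-the-window'' segments, and it is precisely the geometric spacing of the valuations that makes their aggregate contribution a convergent series with value $O(1)$. As a sanity check on why the construction must use $K\to\infty$: uniform pricing at any $\widehat p_i$ is fair and earns at least $\beta_i\pi_i(\widehat p_i)$, so $\Pi(\mathbf p^{\star})\ge\frac1K\Pi(\widehat{\mathbf p})$ and thus $\cof\le K$ on every instance; in particular a two-segment instance can never exceed $\cof=2$, and the $\Theta(K)$ growth exhibited here is best possible up to the constant.
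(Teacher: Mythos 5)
Your construction is correct and does establish the proposition, but by a genuinely different route than the paper --- and your own closing sanity check shows that the difference is not optional. The paper works with $K=2$ segments and two valuations and argues from the Case-3 expression in Eq.~\eqref{eq:cof} that letting $v_1/v_2 \to 0$ drives $\cof$ to $v_2/(\alpha d_{12})$, hence to infinity. As you observe, however, uniform pricing at the price $\widehat{p}_j$ of the segment $j$ maximizing $\beta_j\pi_j(\widehat{p}_j)$ is always $\alpha$-fair and earns at least $\tfrac{1}{K}\Pi(\widehat{\mathbf{p}})$, so $\cof \le K$ on \emph{every} instance and in particular $\cof \le 2$ on every two-segment instance. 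No two-segment example can therefore witness unbounded \cof. The gap in the paper's argument is that the limit $v_1/v_2\to 0$ leaves the parameter region in which $(v_1+\alpha d_{12},\,v_1)$ is actually the optimal fair pricing: in that limit the uniform price $v_2$ earns at least $\beta v_2(1-q_1)$, which overtakes the Case-3 revenue $\beta(v_1+\alpha d_{12})(1-q_1)+(1-\beta)v_1$, so Eq.~\eqref{eq:cof} stops computing the \cof\ of $\mathbf{p}^{\star}$.

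Your $K$-segment construction is the right repair, and its steps all check out: the fair profile is confined to a unit window; at most one geometrically spaced spike $2^i$ lies inside that window (contributing at most $1$); the segments above the window contribute at most $(c+1)\sum_{i:\,2^i>c+1}2^{-i}\le 2$ by the geometric tail bound; and the segments below contribute nothing. This gives $\Pi(\mathbf{p}^{\star})\le 3/K$ against $\Pi(\widehat{\mathbf{p}})=1$, hence $\cof\ge K/3$, which together with your $\cof\le K$ bound shows the $\Theta(K)$ rate is optimal. What this buys is a valid proof; what it costs, relative to the paper's intent, is that unboundedness of \cof\ now requires both unboundedly many segments and an unbounded valuation range, rather than a single pair of segments with an extreme ratio $v_2/v_1$ --- but the latter is provably impossible, so a construction of your kind is necessary.
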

\begin{proof}

From \cref{eq:cof}  when $\frac{v_1}{v_2} \rightarrow 0$, we have 
$    \cof  = \frac{v_2}{\alpha d_{12}}$. 
The \cof\ (in Case 3 and/or Case 4) is arbitrarily bad if $d_{12}>0$ when there is a large difference between $v_1$ and $v_2$. Note that $d_{12}=0$ is uninteresting as the seller is unable to distinguish between two segments. \end{proof}

Note that $v_2$ being arbitrarily large need not be a commonly occurring setting. Hence, we work with bounded support valuations in the backdrop of the above negative results. 
In the next section, we make assumptions based on standard economic literature about the revenue functions $\pi_i(\cdot)$, i.e., concave revenue functions and common support~\cite{bergemann2021thirddegree}.  As argued in Section 3 of \cite{dhangwatnotai2015}, valuation distributions satisfying Monotone Hazard Rate (MHR) satisfy the assumptions as mentioned above regarding revenue functions.
It is also observed that the revenue functions are concave for another commonly analyzed family of distributions in literature called the regular distributions in which the virtual valuation is non-decreasing (Section 4.3 of \cite{bergemann2021thirddegree}).
MHR is a common assumption in Econ-CS~\cite{hartline2009}.
Therefore, in the following section, we analyze the cost of fairness for such valuation distributions and the associated concave revenue functions.

\section{FFP for Continuous Valuations}
\label{sec:cont}
In this section, we consider feature-based pricing with continuous valuations. We impose a standard restriction on the revenue functions $\pi_i(\cdot)$ such that they are concave on the common support $\mathcal{V} = [\underline{v}, \Bar{v}]$ \cite{bergemann2021thirddegree}.  
The consumer segments are identified by the associated feature vectors $x_i \in \mathcal{X}$. 
$\underline{v}$ is the marginal cost defined as a minimum feasible valuation for which a seller is willing to sell the product. The marginal cost may include the cost of production, transportation, etc. On the other hand, $\Bar{v}$ is the maximum consumer valuation. Without loss of generality, we consider that maximum consumer valuation is greater than marginal cost; i.e., trade occurs.       

We begin with a tight upper bound on the \cof\ under conditions as mentioned above (\cref{sec:ffpcontinuous}) followed by two pricing schemes based on the available information about the revenue functions (\cref{sec:ourffp}), and finally, we present an algorithm that achieves the \cof\ bound in \cref{sec:algorithm}.

\subsection{Optimal FFP for Continuous Valuations}
\label{sec:ffpcontinuous}
The problem of determining optimal FFP can be modeled as a convex program with \ourname\ as linear constraints.
The convex program below describes OPT$_{FFP}(\mathcal{V}, \mathcal{X},d, \mathcal{F},\mathbf{\beta},\alpha)$ model with complete knowledge of revenue functions $\pi_i(\cdot)$.
\begin{subequations}
\begin{empheq}[box=\widefbox]{align*}
    \max_{p_k \in \mathcal{V}, \forall k}  \Pi(\mathbf{p}) &= \sum_{k=1}^{K} \beta_k \pi_k(p_k)\\
\text{subject to, } 
    |p_i - p_j| &\leq \alpha d(x_i, x_j), \forall i \neq j \\
    & p_i \geq 0, \forall i \in [K]
\end{empheq}
\end{subequations}
Let $\mathbf{p^{\star}}$  be a solution to the above problem.
\if 0
And the optimal fair price vector $\mathbf{p^{\star}}$ is given by:
\begin{equation*}
    \mathbf{p^{\star}} = \underset{\mathbf{p} \in \mathbb{R}^{K}_{+} }{\operatorname{argmax}} \  \sum_{k=1}^{K} \beta_k \pi_k(p_k)
\end{equation*}
under the constraints mentioned above. The convex program can be solved using standard solvers.
\ga{I dont think we need the paragraph above. We can simply write "Let  $\mathbf{p}^{\star}$ be a  solution of the above linear program. }\sg{yes...}
\fi

\subsection{\OurFPP\ and \cof\  Analysis}
\label{sec:ourffp}
Let $D_i := \min_{j\neq i} d_{ij}$.  With the following proposition, we propose a class of $\alpha$-fair pricing strategies.
\begin{restatable}{proposition}{ClaimOne} For a given  $m\in[\underline{v},\overline{v}]$, if the price function satisfies
$|p_i - m| \leq    \frac{\alpha}{2} D_i$ for all $i \in [K]$ then  it satisfies \emph{\ourname}.
\label{prop:linp}
\end{restatable}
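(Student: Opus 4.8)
The plan is to verify the $\alpha$-fairness condition $|p_i - p_j| \le \alpha d_{ij}$ directly from the hypothesis $|p_i - m| \le \tfrac{\alpha}{2} D_i$ for all $i$, using the triangle inequality in $\mathbb{R}$ together with the definition $D_i = \min_{j \neq i} d_{ij}$. Fix an arbitrary pair $i \neq j$. First I would write
\[
|p_i - p_j| \le |p_i - m| + |m - p_j| \le \frac{\alpha}{2} D_i + \frac{\alpha}{2} D_j,
\]
where the first step is the triangle inequality and the second applies the hypothesis to both indices $i$ and $j$.

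The key observation is then that, by definition of $D_i$ as a minimum over all $\ell \neq i$, we have $D_i \le d_{ij}$, and likewise $D_j \le d_{ji} = d_{ij}$ (using symmetry of the metric $d$). Substituting these two bounds gives
\[
|p_i - p_j| \le \frac{\alpha}{2} d_{ij} + \frac{\alpha}{2} d_{ij} = \alpha d_{ij},
\]
which is exactly \cref{eq:alphafair}. Since $i \neq j$ was arbitrary, the price function is $\alpha$-fair, i.e., it satisfies \ourname.

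There is essentially no main obstacle here; the argument is a short two-line chain of inequalities. The only point requiring a moment's care is the symmetry step $d_{ji} = d_{ij}$, which is justified because $(\mathcal{X}, d)$ was assumed to be a metric space, so $d$ is symmetric and $D_j \le d_{ji} = d_{ij}$ holds. One should also note that $m \in [\underline{v}, \overline{v}]$ and the hypothesis together ensure the $p_i$ are well-defined feasible prices (in particular nonnegative, since $\underline{v} \ge 0$), so the constructed price vector is indeed a valid FFP strategy; but this is tangential to the fairness claim itself, which is purely the inequality manipulation above.
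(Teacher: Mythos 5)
Your proof is correct and follows exactly the same route as the paper's: triangle inequality through the pivot $m$, the hypothesis applied to both indices, and then $D_i \le d_{ij}$ and $D_j \le d_{ji} = d_{ij}$ from the definition of $D_i$ as a minimum and the symmetry of the metric. Nothing is missing; the added remark about feasibility of the prices is tangential but harmless.
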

\begin{proof}
From triangle inequality, we have 
$ |p_i - p_j|  \leq |p_i - m| + |p_j - m|
     \leq \frac{\alpha}{2} D_i + \frac{\alpha}{2} D_j \leq \alpha d_{ij}$.
The last inequality results from the fact that $ D_i = \min_{k\neq i}d_{ik} \leq d_{ij} $ and $ D_j = \min_{k\neq j}d_{ik} \leq d_{ji} = d_{ij} $.
\end{proof}

In other words, for ensuring that the prices for different segments are not too different, it is enough to ensure that the pricing for each segment is not too different from some common point $m$. The pricing for all the segments would hence be around this point and could be determined with respect to this point. We term this point as {\em pivot}.
We now present the second FFP model, an $\alpha$-fair pricing strategy that is pivot-based and satisfies the condition in \cref{prop:linp}, with access to only $\widehat{p}_i$ for a given $m$.  
\begin{equation}
    p_i = \begin{cases} m + \alpha D_i/2   & \text{ if } \  \widehat{p}_{i} - m \geq \alpha D_i/2 \\
    m - \alpha D_i/2 & \text{ if } m - \widehat{p}_{i}  \geq  \alpha D_i/2 \\ 
    \widehat{p}_i & \text{ otherwise}   \end{cases}
    \label{eq:fairPricing}
\end{equation}
We call this pricing scheme \OurFPP.
It is easy to see that the above pricing strategy is $\alpha$-fair. We now present the \cof\ bound for \OurFPP.

\begin{theorem} \label{thm:cofbound}
The Cost of Fairness for optimal fair price discrimination with concave revenue functions satisfies
$$\cof\ \leq \frac{2}{1+
\min \left\{ \alpha \frac{\min_i D_i}{\Bar{v} - \underline{v}} , 1 \right\}}
$$
\end{theorem}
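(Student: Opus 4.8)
The plan is to upper-bound $\cof$ by comparing $\Pi(\mathbf{p}^{\star})$ not with $\mathbf{p}^{\star}$ directly but with one explicit $\alpha$-fair price vector. Since $\mathbf{p}^{\star}$ is an optimal solution of the convex program of \cref{sec:ffpcontinuous}, we have $\Pi(\mathbf{p}^{\star})\ge \Pi(\mathbf{p})$ for every $\alpha$-fair $\mathbf{p}$, hence $\cof=\Pi(\widehat{\mathbf p})/\Pi(\mathbf p^{\star})\le \Pi(\widehat{\mathbf p})/\Pi(\mathbf p)$. I would instantiate $\mathbf p$ as the \OurFPP\ vector of \cref{eq:fairPricing} with the pivot fixed to the midpoint $m=\tfrac12(\underline v+\overline v)$; by \cref{prop:linp} this vector is $\alpha$-fair and (since each $p_i\ge\underline v\ge 0$) feasible for the program, so it is a legitimate comparison point.

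The heart of the proof is a segment-wise estimate: for every segment $i$,
\[
\pi_i(\widehat p_i)\ \le\ \frac{2}{1+\min\{\delta,1\}}\ \pi_i(p_i),\qquad \delta:=\frac{\alpha\min_k D_k}{\overline v-\underline v}.
\]
This splits into the three branches of \cref{eq:fairPricing}. If $p_i=\widehat p_i$ the ratio is $1$, which is $\le \tfrac{2}{1+\min\{\delta,1\}}$ because that quantity is always $\ge 1$. If $p_i=m+\tfrac{\alpha}{2}D_i$, the branch condition gives $\underline v\le p_i\le\widehat p_i$, so concavity of $\pi_i$ on $[\underline v,\widehat p_i]$ together with $\pi_i(\underline v)\ge 0$ yields $\pi_i(p_i)\ge\frac{p_i-\underline v}{\widehat p_i-\underline v}\pi_i(\widehat p_i)$; since $p_i-\underline v=\tfrac12(\overline v-\underline v)+\tfrac{\alpha}{2}D_i$ and $\widehat p_i-\underline v\le\overline v-\underline v$, the coefficient is at least $\tfrac12\big(1+\tfrac{\alpha D_i}{\overline v-\underline v}\big)\ge\tfrac12(1+\delta)\ge\tfrac12(1+\min\{\delta,1\})$. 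The branch $p_i=m-\tfrac{\alpha}{2}D_i$ is the mirror image: the condition forces $\widehat p_i\le p_i\le\overline v$, and concavity of $\pi_i$ on $[\widehat p_i,\overline v]$ with $\pi_i(\overline v)\ge 0$ gives $\pi_i(p_i)\ge\frac{\overline v-p_i}{\overline v-\widehat p_i}\pi_i(\widehat p_i)$ with $\overline v-p_i=\tfrac12(\overline v-\underline v)+\tfrac{\alpha}{2}D_i$, the same bound. (When $\delta>1$ one checks that the midpoint lies within $\tfrac{\alpha}{2}D_i$ of every $\widehat p_i$, so only the third branch can fire and $\mathbf p=\widehat{\mathbf p}$; this is precisely why $\min\{\delta,1\}=1$ is attainable.)

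To conclude, multiply the segment-wise inequality by $\beta_i$ and sum over $i$: this gives $\Pi(\widehat{\mathbf p})\le\frac{2}{1+\min\{\delta,1\}}\,\Pi(\mathbf p)$, and combining with $\cof\le\Pi(\widehat{\mathbf p})/\Pi(\mathbf p)$ from the first step produces exactly $\cof\le \frac{2}{1+\min\{\alpha\frac{\min_i D_i}{\overline v-\underline v},1\}}$.

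The step that needs the most care is the bookkeeping around the three branches of \cref{eq:fairPricing} for the midpoint pivot: one must verify that each $p_i$ stays inside $[\underline v,\overline v]$ and lands on the correct side of $\widehat p_i$ (so that the chord estimate is applied on an interval where $\pi_i$ is both monotone and concave), and that the degenerate sub-cases $\widehat p_i=\underline v$ or $\widehat p_i=\overline v$ — where a denominator would vanish — collapse into the trivial branch $p_i=\widehat p_i$. Once that is in place, each case is a one-line concavity estimate and the final assembly is just a weighted average.
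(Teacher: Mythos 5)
Your proposal is correct and follows essentially the same route as the paper: you compare against the \OurFPP\ vector with the midpoint pivot $m=\tfrac12(\underline v+\overline v)$, and your segment-wise chord estimate $\pi_i(p_i)\ge\frac{p_i-\underline v}{\widehat p_i-\underline v}\,\pi_i(\widehat p_i)$ (and its mirror) is exactly the paper's piecewise-linear lower bound $L_i$ and the factor $\gamma_i$ in \cref{lem:approx}, just stated directly via concavity rather than through a named approximant. Your attention to the degenerate denominators and to the case $\delta>1$ is a slightly more careful rendering of the same argument.
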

\begin{proof}
We prove that the above \cof\ is satisfied by \OurFPP\ and hence the theorem. 
Let $m \in [\underline{v}, \Bar{v}]$ be a pivot point (See Figure \ref{fig:my_label}). Let 
\begin{equation}\label{eq:gamma}
    \gamma_i:= \begin{cases}
    \frac{(m-\underline{v}) + \alpha D_i/2}{\widehat{p}_i - \underline{v}} & \text{if } \widehat{p}_{i} - m \geq \alpha D_i/2  \\ 
    \frac{ (\Bar{v}-m) + \alpha D_i/2  }{\Bar{v} - \widehat{p}_i} & \text{if } m - \widehat{p}_{i} \geq \alpha D_i/2 \\
    1 & \text{ otherwise}
\end{cases}\end{equation}
Let $\widehat{\pi}_{i}$ be the expected revenue generated from the $i^{\text{th}}$ segment under $\widehat{\mathbf{p}}$. We now show the following supporting lemma.
 
\begin{figure}
\label{im:linapprox}
    \centering
\includegraphics[width=.85\columnwidth]{./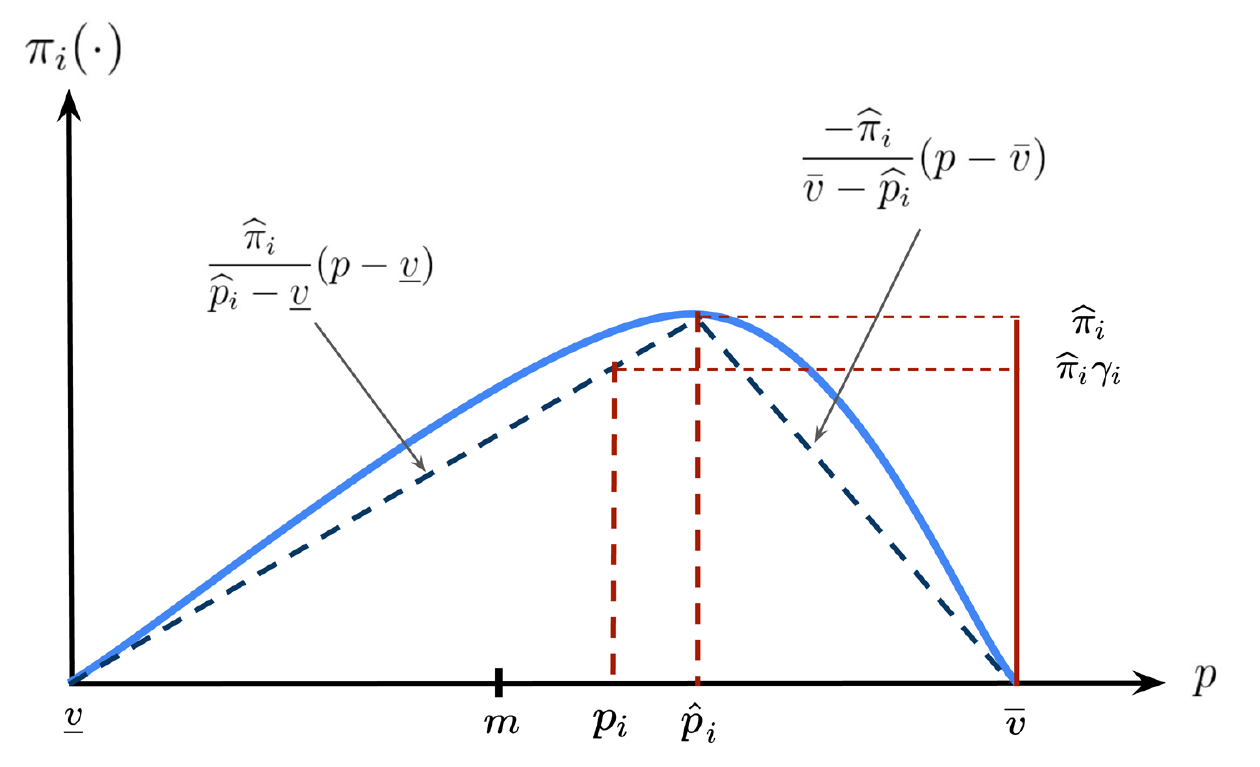}
\caption{Concave revenue function $\pi_i(\cdot)$ and its linear approximation $L_i(\cdot)$ (arrows show equations for $L_i(\cdot)$). Figure represents the case $\widehat{p}_{i} - m \geq \alpha D_i/2$ for which \OurFPP\ assigns $p_i = m + \alpha D_i/2$.  The case $m - \widehat{p}_{i} \geq \alpha D_i/2$ is similar.}
\label{fig:my_label}
\end{figure}

\begin{restatable}{lemma}{CoF} The pricing strategy given in \cref{eq:fairPricing} guarantees at-least $\gamma_i$ fraction of optimal revenue from segment $i$, i.e.,   $\pi_{i} \geq \gamma_i\widehat{\pi}_{i}$.
\label{lem:approx}
\end{restatable}
\begin{proof}
A lower bound to the concave revenue functions $\pi_i(\cdot)$ for any segment $i$ is the piecewise linear approximation $L_i$, given by (see Figure  \ref{fig:my_label}):
\begin{equation}
    L_i(p) = \begin{cases} \frac{\widehat{\pi}_i}{\widehat{p}_i-\underline{v}} (p-\underline{v}), & p \leq \widehat{p}_i \\
    \frac{-\widehat{\pi}_i}{\Bar{v}-\widehat{p}_i}(p-\Bar{v}), & p > \widehat{p}_i
    \end{cases}
\end{equation}
So, for each consumer segment $i$ we have, 
\begin{equation*}\label{eq:linapprox}
L_i(p) \leq \pi_i(p), \ \forall p \in [\underline{v},\Bar{v}]
\end{equation*}

Expected revenues generated per consumer in segment $i$ by pricing rule in Eq. \eqref{eq:fairPricing} for $\widehat{p}_{i} - m \geq \alpha D_i/2$, $m - \widehat{p}_{i} \geq \alpha D_i/2$, and remaining cases are given below in the respective order
\begin{align*}
\pi_i(p_i)& \geq L_i(p_i) 
= \frac{\widehat{\pi}_i}{\widehat{p}_i - \underline{v}}(m+\alpha D_i/2-\underline{v}) = \widehat{\pi}_i \gamma_i \\
\pi_i(p_i)& \geq L_i(p_i) 
= \frac{-\widehat{\pi}_i}{\Bar{v} - \widehat{p}_i}(m - \alpha D_i/2 - \Bar{v} ) = \widehat{\pi}_i \gamma_i \\
\pi_i(p_i)& = L_i(\widehat{p}_i) = \widehat{\pi}_i
\end{align*}
This proves the lemma.
\end{proof}

Let $\pi_{i}^{\star}$ denote the expected revenue generated from the $i^{\text{th}}$ segment by $\mathbf{p}^{\star}$. So, \cof\ for optimal FPP is given by: 
\begin{subequations}
\begin{align*}
    \cof  &= \frac{\sum \limits_{i \in [K]} \beta_{i} \widehat{\pi}_{i}}{\sum \limits_{i \in [K]} \beta_{i} \pi_{i}^{\star}} \leq \frac{\sum \limits_{i \in [K]} \beta_{i} \widehat{\pi}_{i}}{\sum \limits_{i \in [K]} \beta_{i} \pi_{i}} \tag{Optimality of $\pi_i^{\star}$}\\
    & \leq \frac{\sum \limits_{i \in [K]} \beta_{i} \widehat{\pi}_{i} }{\sum \limits_{i \in [K]} \beta_{i} \gamma_i \widehat{\pi}_{i}} \tag{\cref{lem:approx}}
    \end{align*}
\end{subequations}

In order to prove the said \cof\ bound, it suffices to show that there exists an $m$ (and hence a corresponding pricing strategy using \cref{eq:fairPricing}) for which the said \cof\ bound is satisfied. It can be seen that for $m = (\underline{v} + \Bar{v})/2$, and replacing denominators in \cref{eq:gamma} by $\Bar{v}-\underline{v}$, we have that
\begin{subequations}
\begin{align*}
   \cof & \leq  \frac{\sum_{i \in [K]} \beta_{i} \widehat{\pi}_{i}}{\sum_{i \in [K]} \beta_{i} \widehat{\pi}_{i}\left(\frac{1}{2} + \min\{ \frac{\alpha D_i}{2 (\Bar{v} - \underline{v})} , 1 \}\right)} \\
    & \leq  \frac{\sum_{i \in [K]} \beta_{i} \widehat{\pi}_{i}}{\left(\sum_{i \in [K]} \beta_{i} \widehat{\pi}_{i}\right) \left(\frac{1}{2} + \min\{ \frac{\alpha \min_j D_j}{2 (\Bar{v} - \underline{v})} , 1 \}\right)} \\ 
    & = \frac{2}{ 1 + \min \left\{ \alpha \frac{\min_j D_j}{\Bar{v} - \underline{v}}  , 1 \right\}}
\end{align*}
\end{subequations}\end{proof}

It is worth noting here that the cost of fairness does not depend on the number of the segments and the distribution of the population among these segments. So, if the segments are well separated in terms of the distance between features of consumers across segments the number of segments as well as the distribution of consumer population in these segments do not affect revenue guarantee. Also, if the admissible prices are supported over a large interval, the fairness guarantee becomes weaker. This insight discourages  pricing schemes with wildly varying prices across segments.
Finally, if $\alpha = 0$, i.e., without any fairness constraints, we recover the bound of 2 proved in \cite{bergemann2021thirddegree}. 

We emphasize that the bound is strictly less than $2$ because, under fairness constraints, $\alpha \neq 0$ and typically the consumer types are well separated in the feature space according to the metric $d$ else, the consumer types are indistinguishable for the seller hence, $d_{ij} \neq 0$ for all $i, j \in [K]$. This is an improvement of the \cof\ bound given in \cite{bergemann2021thirddegree}.

\paragraph{Tightness of \cof\ bound:}
We claim that the \cof\ bound presented above is tight. In the following example, equality holds and proves the tightness of the bound. 
\begin{example}[Tightness of the \cof\ bound]
Consider $K=2$ where $\beta_1=\beta_2=\frac{1}{2}$. Consider $\mathcal{F}_i$ be such that $\pi_{i}(\cdot) = L_i(\cdot)$ with $\widehat{p}_1 = \underline{v}+\epsilon, \widehat{p}_2 = \Bar{v}-\epsilon,$ where $\epsilon \rightarrow 0$, and $\widehat{\pi}_1 = \widehat{\pi}_2$.
It can be seen that if $\alpha$ is such that $\alpha d_{12} < \Bar{v}-\underline{v}$, any FP satisfying  $p_2-p_1 = \alpha d_{12}$ and $p_1,p_2 \in [\widehat{p}_1, \widehat{p}_2]$ is an optimal FFP (fair FP), and the corresponding 
$\cof = \frac{2}{1+\frac{\alpha d_{12}}{\Bar{v}-\underline{v}}}$. If $\alpha d_{12} \geq \Bar{v}-\underline{v}$, the optimal FP is $\alpha$-fair and so, $\cof = 1$.
Hence, for this example, $\cof = \frac{2}{1 + \min \left\{ \alpha \frac{d_{12}}{\Bar{v} - \underline{v}} , 1 \right\}}$.
This shows the tightness of the \cof\ bound derived in \cref{thm:cofbound}.
\end{example}

We now present an algorithm, OPT-\OurFPP, to find the optimal pivot $m^{\star}$ in the above \OurFPP\ strategy when only $\widehat{p}$ and $\widehat{\pi_i}$s are known. 

\subsection{Proposed Algorithm}
\label{sec:algorithm}
As \OurFPP\ satisfies \ourname (\cref{prop:linp}), and also achieves \cof\ bounds in ~\cref{thm:cofbound}, we look for a pricing strategy optimal within class of \OurFPP. It reduces to finding an optimal pivot that maximizes revenue. In this section, we  propose a binary-search-based algorithm for the same. 
For pricing $\mathbf{p}$, the expected revenue generated per consumer is given by
$
\Pi(\mathbf{p}) = \sum_{i=1}^{K} \beta_i \pi_i(p_i)
$. Let $\tau_i := \frac{\alpha}{2} D_i$. Observe from \cref{lem:approx} that $\Pi(\mathbf{p})$ is lower bounded as:

\begin{subequations}\label{eq:fairrev}
\begin{align*}
    \Pi &(\mathbf{p}) \geq \Pi_m(\mathbf{L}) = \sum_{i=1}^{K} \beta_i \gamma_i \widehat{\pi}_i  = \sum \limits_{\substack{ i:|\widehat{p}_i - m |< \tau_i}}\beta_i \widehat{\pi}_i \ \ + \\&   \sum_{\substack{ i:\widehat{p}_i - m \geq \tau_i}}\beta_i \widehat{\pi}_i \frac{m+ \tau_i-\underline{v}    }{\widehat{p}_i - \underline{v}}  + \sum_{\substack{ i:m-\widehat{p}_i \geq \tau_i}} \beta_i \widehat{\pi}_i \frac{ \Bar{v}-m +  \tau_i  }{\Bar{v} - \widehat{p}_i}  \tag{\ref{eq:fairrev}}
\end{align*}
\end{subequations}

\subsubsection*{Determining Optimal Pivot $m$}
As we can see, the revenue generated by \OurFPP\ is lower bounded by a piecewise linear function in $m$. 
With the aim of achieving a better lower bound, we now address the problem of determining an optimal pivot  $m^{\star} \in \underset{m \in [\underline{v}, \Bar{v}]}{\operatorname{argmax}} \ \Pi_m(\mathbf{L})$.

\subsubsection*{Pricing Algorithm}

In what follows, we call the candidate points $m$ for optimal pivot, i.e.,  for maximizing $\Pi_m(\mathbf{L})$, as \emph{critical points}. We denote the set of these critical points as $\mathcal{M}$.

\begin{lemma}\label{lem:unimodal}
$\Pi_m(\mathbf{L})$ as a function of $m$ is concave and piecewise linear with the set of critical points $\mathcal{M} = \left( \{\widehat{p}_i - \frac{\alpha}{2} D_i , \widehat{p}_i + \frac{\alpha}{2} D_i \}_{i \in [K]} \cap [\underline{v}, \Bar{v}] \right) \cup \{\underline{v}, \Bar{v}\} $. 
\end{lemma}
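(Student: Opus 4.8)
The plan is to analyze $\Pi_m(\mathbf{L})$ as a sum of per-segment contributions and show each is concave and piecewise linear in $m$, with breakpoints only at the claimed critical points. Recall from \cref{eq:fairrev} that
\[
\Pi_m(\mathbf{L}) = \sum_{i=1}^{K} \beta_i \widehat{\pi}_i \, g_i(m),
\]
where $g_i(m)$ is the per-segment multiplier: it equals $1$ when $|\widehat{p}_i - m| < \tau_i$, equals $\frac{m+\tau_i-\underline{v}}{\widehat{p}_i-\underline{v}}$ when $m \leq \widehat{p}_i - \tau_i$, and equals $\frac{\Bar{v}-m+\tau_i}{\Bar{v}-\widehat{p}_i}$ when $m \geq \widehat{p}_i + \tau_i$. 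Since $\beta_i \widehat{\pi}_i \geq 0$, a nonnegative combination of concave functions is concave, so it suffices to prove each $g_i$ is concave and piecewise linear on $[\underline{v},\Bar{v}]$ with breakpoints contained in $\{\widehat{p}_i - \tau_i,\ \widehat{p}_i + \tau_i\}$.

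First I would observe that $g_i$ is manifestly piecewise linear: it is affine on each of the three intervals $(-\infty, \widehat{p}_i-\tau_i]$, $[\widehat{p}_i-\tau_i, \widehat{p}_i+\tau_i]$, $[\widehat{p}_i+\tau_i, +\infty)$, with the middle piece constant equal to $1$. Next I would verify continuity at the two breakpoints: at $m = \widehat{p}_i - \tau_i$ the left piece gives $\frac{(\widehat{p}_i-\tau_i)+\tau_i-\underline{v}}{\widehat{p}_i-\underline{v}} = 1$, matching the middle piece; symmetrically at $m = \widehat{p}_i + \tau_i$ the right piece gives $\frac{\Bar{v}-(\widehat{p}_i+\tau_i)+\tau_i}{\Bar{v}-\widehat{p}_i} = 1$. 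For concavity of a continuous piecewise linear function it is enough that the slopes are non-increasing as $m$ increases: the left piece has slope $\frac{1}{\widehat{p}_i-\underline{v}} > 0$, the middle piece has slope $0$, and the right piece has slope $\frac{-1}{\Bar{v}-\widehat{p}_i} < 0$. Since $\frac{1}{\widehat{p}_i-\underline{v}} \geq 0 \geq \frac{-1}{\Bar{v}-\widehat{p}_i}$, the slopes are non-increasing, so $g_i$ is concave. (One should note $\widehat{p}_i \in (\underline{v},\Bar{v})$ for these slopes to be finite; the boundary cases $\widehat{p}_i = \underline{v}$ or $\widehat{p}_i = \Bar{v}$ degenerate but remain concave and piecewise linear, and only add $\underline{v}$ or $\Bar{v}$ as endpoints.)

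Having established concavity and piecewise-linearity of the sum, the breakpoints of $\Pi_m(\mathbf{L})$ over $\mathbb{R}$ lie in $\bigcup_{i\in[K]}\{\widehat{p}_i - \tau_i, \widehat{p}_i + \tau_i\}$; intersecting with the feasible interval $[\underline{v},\Bar{v}]$ and adjoining the endpoints $\{\underline{v},\Bar{v}\}$ (where the restriction to the domain may create a ``corner'' relevant to maximization) gives exactly the claimed set $\mathcal{M}$. Since a concave piecewise-linear function on a closed interval attains its maximum at a breakpoint or an endpoint, $m^\star$ can be taken in $\mathcal{M}$. The main thing to be careful about — the only place the argument could slip — is the bookkeeping of which $\tau_i$-offset points actually fall inside $[\underline{v},\Bar{v}]$ and the degenerate cases where $\widehat{p}_i$ sits at an endpoint of the support; everything else is the routine fact that a nonnegative sum of concave piecewise-linear functions is concave piecewise-linear with breakpoint set contained in the union of the individual breakpoint sets.
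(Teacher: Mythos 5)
Your proof is correct and follows essentially the same route as the paper's: decompose $\Pi_m(\mathbf{L})$ into the per-segment multipliers $\gamma_i$ (your $g_i$), show each is continuous, piecewise linear with breakpoints at $\widehat{p}_i \pm \tau_i$ and non-increasing slopes (hence concave), and conclude via a nonnegative weighted sum. You merely spell out the continuity and slope checks that the paper dismisses as ``easy to see,'' and you correctly flag the same endpoint/boundary bookkeeping.
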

\begin{proof}
It is easy to see that for a segment $i$, $\gamma_i$ as a function of $m$ is continuous and piecewise linear with breakpoints (i.e., points at which piecewise linear function changes slope): $\widehat{p}_i - \frac{\alpha}{2} D_i$ and $\widehat{p}_i + \frac{\alpha}{2} D_i$ provided they are in the range $[\underline{v}, \Bar{v}]$. The set of breakpoints is hence $\{ \widehat{p}_i - \frac{\alpha}{2} D_i, \widehat{p}_i + \frac{\alpha}{2} D_i\} \cap [\underline{v}, \Bar{v}]$.
Also, the slope monotonically decreases at the breakpoints, i.e., $\gamma_i$ is a concave function of $m$.

From \cref{eq:fairrev}, we can see that $\Pi_m(\mathbf{L})$ is a weighted sum over all segments, of $\gamma_i$'s with constant weights $\beta_i \widehat{\pi}_i$. 
So, $\Pi_m(\mathbf{L})$ as a function of $m$ is concave and piecewise linear with breakpoints belonging to the following set: $\{\widehat{p}_i - \frac{\alpha}{2} D_i , \widehat{p}_i + \frac{\alpha}{2} D_i \}_{i \in [K]} \cap [\underline{v}, \Bar{v}]$.
Hence, a point $m$ that maximizes $\Pi_m(\mathbf{L})$ belongs to either the aforementioned set of breakpoints, or the set of its boundary points $\{\underline{v}, \Bar{v}\}$. 
Thus, the set of critical points $\mathcal{M} = \left( \{\widehat{p}_i - \frac{\alpha}{2} D_i , \widehat{p}_i + \frac{\alpha}{2} D_i \}_{i \in [K]} \cap [\underline{v}, \Bar{v}] \right) \cup \{\underline{v}, \Bar{v}\} $.
\end{proof}

Our algorithm OPT-\OurFPP\ (Optimal Linearized Pivot-based Fair Feature-based Pricing)
which determines an optimal pivot $m^{\star}$ and provides an $\alpha$-fair pricing strategy ($\widetilde{\mathbf{p}}$) is presented in \cref{alg:ouralgo}.

\begin{algorithm}[!ht]
\caption{OPT-\OurFPP\ } 
\label{alg:ouralgo}
\DontPrintSemicolon
  
 \KwInput{$\alpha, \mathbf{\widehat{p}}, (\widehat{\pi}_1,\ldots,\widehat{\pi}_K), (\beta_1,\ldots,\beta_K), (D_1, \ldots, D_K)$}
  \KwOutput{$m^{\star}, \widetilde{\mathbf{p}} $ }

\tcc{Creating and sorting the set of critical points}
$\mathcal{M} \gets \{\underline{v}, \Bar{v}\}$  \;
\For{$i \in [K]$}{
$\tau_i \gets \frac{\alpha}{2} D_i$ \;
\If{$\widehat{p}_i - \tau_i > \underline{v}$}{
$\mathcal{M} \gets  \mathcal{M} \cup \{\widehat{p}_i - \tau_i\}$ \;
}
\If{$\widehat{p}_i + \tau_i < \Bar{v}$}{
$\mathcal{M} \gets  \mathcal{M} \cup \{\widehat{p}_i + \tau_i\}$ \;
}
}
sort($\mathcal{M}$) \;

\tcc{Binary search for optimal pivot}
$\ell \gets 0$, $r \gets |\mathcal{M}|-1$ \;
\While{$\ell \leq r$}
{
$z \gets \lfloor\! \frac{\ell+r}{2} \!\rfloor$
\tcp*{$\!\mathcal{M}[z]\!$ is the current pivot}
\tcc{Computing the expression in \cref{eq:fairrev} at current and adjacent critical points}
$\Pi_{\mathcal{M}[z-1]} \gets 0$, $\Pi_{\mathcal{M}[z]} \gets 0$, $\Pi_{\mathcal{M}[z+1]} \gets 0$ \;
    \For{$y \gets \{z-1, z, z+1\}$}
    {
    \For{$i \gets 1$ to $K$}
    {
    \uIf{$\widehat{p}_{i} \geq \mathcal{M}[y] + \tau_i$}{
    $\gamma_i \gets \frac{\mathcal{M}[y]  -\underline{v} + \tau_i }{\widehat{p}_i - \underline{v}}$
    }
    \uElseIf{$\widehat{p}_{i} \leq \mathcal{M}[y] -  \tau_i$}{
    $\gamma_i \gets \frac{ \Bar{v}- \mathcal{M}[y] + \tau_i  }{\Bar{v} - \widehat{p}_i}$
    }
    \Else{
    $\gamma_i \gets 1$
    }
    $\Pi_{\mathcal{M}[y]} \gets \Pi_{\mathcal{M}[y]} + \beta_i \gamma_i \widehat{\pi}_i$ \;
    }
}
\uIf{$\Pi_{\mathcal{M}[z-1]} \leq \Pi_{\mathcal{M}[z]} \leq \Pi_{\mathcal{M}[z+1]}$}
{
$\ell \gets z+1$ \;
}
\uElseIf{$\Pi_{\mathcal{M}[z-1]} \geq \Pi_{\mathcal{M}[z]} \geq \Pi_{\mathcal{M}[z+1]}$}
{
$r \gets z-1$
}
\Else{
$m^{\star} \gets \mathcal{M}[z]$ \;
\textbf{break} \;
}
}
\tcc{Pricing for the different segments}
\For{$i \in [K]$}{
\uIf{$\widehat{p}_{i}  \geq m^{\star} + \tau_i$}{
$\widetilde{p}_i \gets m^{\star} + \tau_i$
}
\uElseIf{$\widehat{p}_{i}  \leq m^{\star} - \tau_i$}{
$\widetilde{p}_i \gets m^{\star} - \tau_i$
}
\Else{
$\widetilde{p}_i \gets \widehat{p}_i$
}
}
\end{algorithm}

\begin{theorem}
The \emph{OPT}-\OurFPP\ algorithm   (a) returns optimal pivot point $m^{\star}$ and runs in $\mathcal{O}(K \log(K))$ time, and  (b) achieves the  \cof\  bound given in \cref{thm:cofbound}. 
\label{thm:opt_cofbound}
\end{theorem}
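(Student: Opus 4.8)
I would prove (a) and (b) in turn. For the running time in (a), observe that \cref{alg:ouralgo} has three phases. Building $\mathcal{M}$ does a constant amount of work per segment, so it costs $O(K)$, and since $|\mathcal{M}| \le 2K+2$, sorting $\mathcal{M}$ costs $O(K\log K)$; the closing per-segment pricing loop is another $O(K)$. The only phase needing care is the \texttt{while} loop. Each iteration evaluates the lower bound $\Pi_m(\mathbf{L})$ of \cref{eq:fairrev} at three consecutive critical points, and each such evaluation is a single $O(K)$ sweep over the segments (compute $\gamma_i$ from its three-case formula, accumulate $\beta_i\gamma_i\widehat{\pi}_i$); moreover the window $[\ell,r]$ at least halves on every iteration that does not terminate, giving $O(\log|\mathcal{M}|)=O(\log K)$ iterations. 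Hence the loop costs $O(K\log K)$, and so does the whole algorithm.

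For correctness of the returned pivot, I would appeal to \cref{lem:unimodal}: $\Pi_m(\mathbf{L})$ is concave and piecewise linear on $[\underline{v},\Bar{v}]$ and attains its maximum over the interval at a point of $\mathcal{M}$, so it suffices to maximize the finite sequence $\bigl(\Pi_{\mathcal{M}[0]}(\mathbf{L}),\dots,\Pi_{\mathcal{M}[|\mathcal{M}|-1]}(\mathbf{L})\bigr)$; since $\mathcal{M}$ is sorted and $\Pi_\cdot(\mathbf{L})$ is concave, this sequence is unimodal (nondecreasing, then nonincreasing). The \texttt{while} loop is exactly unimodal search on it: with the convention that the out-of-range neighbours $\Pi_{\mathcal{M}[-1]}$ and $\Pi_{\mathcal{M}[|\mathcal{M}|]}$ are $-\infty$, at probe $z$ a nondecreasing triple forces a maximizer into $[z+1,r]$, a nonincreasing triple forces one into $[\ell,z-1]$, and any other pattern means $\mathcal{M}[z]$ is already a maximizer. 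I would carry the loop invariant ``$[\ell,r]$ contains an index of a maximizer of the sequence'', check that each of the three branches preserves it, and conclude that the algorithm returns some $m^{\star}\in\operatorname{argmax}_{m\in[\underline{v},\Bar{v}]}\Pi_m(\mathbf{L})$.

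For (b), note that the output $\widetilde{\mathbf{p}}$ is precisely the \OurFPP\ price vector of \cref{eq:fairPricing} at pivot $m^{\star}$, so it is $\alpha$-fair by \cref{prop:linp}, and by \cref{lem:approx} its revenue satisfies $\Pi(\widetilde{\mathbf{p}})\ge\sum_i\beta_i\gamma_i\widehat{\pi}_i=\Pi_{m^{\star}}(\mathbf{L})$ with the $\gamma_i$ of \cref{eq:gamma} evaluated at $m^{\star}$. By part (a), $m^{\star}$ maximizes $\Pi_m(\mathbf{L})$ over $[\underline{v},\Bar{v}]$, so in particular $\Pi_{m^{\star}}(\mathbf{L})\ge\Pi_{m_0}(\mathbf{L})$ for $m_0=(\underline{v}+\Bar{v})/2$. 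The proof of \cref{thm:cofbound} already establishes $\sum_i\beta_i\widehat{\pi}_i\le\Pi_{m_0}(\mathbf{L})\cdot\tfrac{2}{1+\min\{\alpha\min_jD_j/(\Bar{v}-\underline{v}),1\}}$; since $\Pi(\widehat{\mathbf{p}})=\sum_i\beta_i\widehat{\pi}_i$, chaining gives $\cof(\widetilde{\mathbf{p}})=\Pi(\widehat{\mathbf{p}})/\Pi(\widetilde{\mathbf{p}})\le(\sum_i\beta_i\widehat{\pi}_i)/\Pi_{m_0}(\mathbf{L})\le\tfrac{2}{1+\min\{\alpha\min_jD_j/(\Bar{v}-\underline{v}),1\}}$, i.e.\ the bound of \cref{thm:cofbound}.

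\textbf{Main obstacle.} The delicate step is the correctness of the unimodal search in (a): stating and preserving the loop invariant through the three comparison branches using only concavity (not strict unimodality), handling the boundary neighbour indices via $\pm\infty$ sentinels, and a minor tie-breaking subtlety when $\Pi_m(\mathbf{L})$ is maximized on an entire flat segment (several consecutive critical points optimal) — settled by observing that any of them is an acceptable $m^{\star}$. The runtime part of (a) and all of (b) are then short, given \cref{lem:unimodal}, \cref{lem:approx}, \cref{prop:linp}, and the proof of \cref{thm:cofbound}.
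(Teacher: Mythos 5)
Your proposal is correct and follows essentially the same route as the paper's proof: $O(K\log K)$ from sorting plus a ternary-style search over the at most $2K+2$ critical points justified by \cref{lem:unimodal}, and the \cof\ bound obtained by chaining $\Pi(\widetilde{\mathbf{p}})\geq \Pi_{m^{\star}}(\mathbf{L})\geq \Pi_{m_0}(\mathbf{L})$ for $m_0=(\underline{v}+\Bar{v})/2$ into the argument of \cref{thm:cofbound}. The only difference is that you spell out the loop invariant, boundary sentinels, and tie-breaking for the unimodal search, which the paper leaves implicit by simply citing \cref{lem:unimodal}.
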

\begin{proof}
(a) The first module is the creation and sorting of the set of critical points $\mathcal{M}$, which takes $\mathcal{O}(K \log(K))$ time.
Owing to \cref{lem:unimodal}, we can find an optimal pivot $m^{\star}$ using binary search over $\mathcal{M}$. Here, the number of critical points are at most $2K+2$, i.e., $|\mathcal{M}| \leq 2K+2$.
So, in the second module that finds an optimal pivot, the binary search in the outer (\emph{while}) loop runs for $\mathcal{O}(\log(|\mathcal{M}|))$ iterations, and the inner (\emph{for}) loops run for $\mathcal{O}(K)$ iterations overall.
Thus, the running time of the second module is $\mathcal{O}(K \log(K))$.
The third module that computes pricing for the different segments runs in $\mathcal{O}(K)$ time.
So, the total running time of Algorithm \ref{alg:ouralgo} is $\mathcal{O}(K \log(K))$. 

(b) From \cref{thm:cofbound}, for $m=(\underline{v}+\Bar{v})/2$, the \cof\  bound holds. Also, $\Pi_{m^{\star}}(\mathbf{L}) \geq \Pi_m(\mathbf{L})$ for all $m \neq m^{\star}$. We have:
$$
\cof = \frac{\Pi(\widehat{\mathbf{p}})}{\Pi(\widetilde{\mathbf{p}})} \leq \frac{\Pi(\widehat{\mathbf{p}})}{\Pi_{m^{\star}}(\mathbf{L})} \leq \frac{\Pi(\widehat{\mathbf{p}})}{\Pi_m(\mathbf{L})}   
$$
This completes the proof of the theorem.
\end{proof}

\section{Discussion}
This paper built a foundation for the design of fair feature-based pricing by proposing a new fairness notion called \ourname$\!\!$. Our impossibility result on the discrete valuation setting restricted us from attaining a finite cost of fairness (\cof) in general settings. Interestingly, in the continuous valuation setting with concave revenue functions, we showed that a family of pricing schemes, \OurFPP, provided a \cof\ strictly less than $2$. Finally, we proposed an algorithm, OPT-\OurFPP, which gave us an optimal pricing strategy within this family. It is worth noting that the algorithm does not require a complete distribution function; peaks of revenue distributions are sufficient statistics for computing optimal fair feature-based pricing. 

We leave the problem of finding an optimal segmentation (optimal value of $K$ and corresponding $K$-partition of the market) as interesting future work. We assumed a monopoly market. It will be interesting to study optimal fair pricing in the face of competition and other constraints such as finite supply, non-linear   production cost, 
and variable demand.

\bibliographystyle{ACM-Reference-Format}
\bibliography{arxiv}
\end{document}